\documentclass[12pt]{article}%

\newcommand{\UsePackage}[1]{%
  \IfFileExists{../styles/#1.sty}{%
      \usepackage{../styles/#1}%
   }{%
      \IfFileExists{./styles/#1.sty}{%
         \usepackage{styles/#1}%
      }{%
         \usepackage{#1}%
      }%
   }%
}

\IfFileExists{sariel_computer.sty}{\def\sarielComp{1}}{}
\ifx\sarielComp\undefined%
\newcommand{\SarielComp}[1]{}
\newcommand{\NotSarielComp}[1]{#1}%
\else
\newcommand{\SarielComp}[1]{#1}%
\newcommand{\NotSarielComp}[1]{}%
\fi
\newcommand{\IfPrinterVer}[2]{#2}%


\usepackage[cm]{fullpage}%
\usepackage{amsmath}%
\usepackage{amssymb}%
\usepackage{xcolor}%

\usepackage[amsmath,thmmarks]{ntheorem}%
\theoremseparator{.}%

\usepackage{titlesec}%
\titlelabel{\thetitle. }%
\usepackage{xcolor}%
\usepackage{mleftright}%
\usepackage{xspace}%
\usepackage{paralist}%
\usepackage{hyperref}%
\usepackage{graphicx}%

\newcommand{\hrefb}[3][black]{\href{#2}{\color{#1}{#3}}}%

\IfPrinterVer{%
   \usepackage{hyperref}%
}{%
   \usepackage{hyperref}%
   \hypersetup{%
      breaklinks,%
      ocgcolorlinks, colorlinks=true,%
      urlcolor=[rgb]{0.25,0.0,0.0},%
      linkcolor=[rgb]{0.5,0.0,0.0},%
      citecolor=[rgb]{0,0.2,0.445},%
      filecolor=[rgb]{0,0,0.4},
      anchorcolor=[rgb]={0.0,0.1,0.2}%
   }
}

\theoremseparator{.}%

\theoremstyle{plain}%
\newtheorem{theorem}{Theorem}[section]

\newtheorem{lemma}[theorem]{Lemma}

\newtheorem{corollary}[theorem]{Corollary}

\theoremstyle{plain}%
\theoremheaderfont{\sf} \theorembodyfont{\upshape}%
\newtheorem*{remark:unnumbered}[theorem]{Remark}%
%
%

%

\newcommand{\myqedsymbol}{\rule{2mm}{2mm}}

\theoremheaderfont{\em}%
\theorembodyfont{\upshape}%
\theoremstyle{nonumberplain}%
\theoremseparator{}%
\theoremsymbol{\myqedsymbol}%
\newtheorem{proof}{Proof:}%

%



\newcommand{\atgen}{\symbol{'100}}
\newcommand{\SarielThanks}[1]{\thanks{Department of Computer Science;
      University of Illinois; 201 N. Goodwin Avenue; Urbana, IL,
      61801, USA; {\tt sariel\atgen{}illinois.edu}; {\tt
         \url{http://sarielhp.org/}.} #1}}



\newcommand{\RahulThanks}[1]{\thanks{Department of Computer Science;
      University of Illinois; 201 N. Goodwin Avenue; Urbana, IL,
      61801, USA; {\tt saladi.rahul\atgen{}gmail.com.} #1}}


\newcommand{\HLink}[2]{\hyperref[#2]{#1~\ref*{#2}}}
\newcommand{\HLinkSuffix}[3]{\hyperref[#2]{#1\ref*{#2}{#3}}}

\newcommand{\thmlab}[1]{{\label{theo:#1}}}
\newcommand{\thmref}[1]{\HLink{Theorem}{theo:#1}}

\newcommand{\lemlab}[1]{\label{lemma:#1}}
\newcommand{\lemref}[1]{\HLink{Lemma}{lemma:#1}}%

\newcommand{\itemlab}[1]{\label{item:#1}}
\newcommand{\itemref}[1]{\HLinkSuffix{}{item:#1}{}}

\providecommand{\eqlab}[1]{}%
\renewcommand{\eqlab}[1]{\label{equation:#1}}


\newcommand{\remove}[1]{}%
\newcommand{\Set}[2]{\left\{ #1 \;\middle\vert\; #2 \right\}}
\newcommand{\pth}[2][\!]{\mleft({#2}\mright)}%
\newcommand{\pbrcx}[1]{\left[ {#1} \right]}%
\newcommand{\Ex}[2][\!]{\mathop{\mathbf{E}}#1\pbrcx{#2}}

\newcommand{\ceil}[1]{\left\lceil {#1} \right\rceil}
\newcommand{\floor}[1]{\left\lfloor {#1} \right\rfloor}

\newcommand{\brc}[1]{\left\{ {#1} \right\}}
\newcommand{\cardin}[1]{\left| {#1} \right|}%

\renewcommand{\th}{th\xspace}
%

%
%

\usepackage[inline]{enumitem}

\newlist{compactenumA}{enumerate}{5}%
\setlist[compactenumA]{topsep=0pt,itemsep=-1ex,partopsep=1ex,parsep=1ex,%
   label=(\Alph*)}%

\newlist{compactenuma}{enumerate}{5}%
\setlist[compactenuma]{topsep=0pt,itemsep=-1ex,partopsep=1ex,parsep=1ex,%
   label=(\alph*)}%

\newlist{compactenumI}{enumerate}{5}%
\setlist[compactenumI]{topsep=0pt,itemsep=-1ex,partopsep=1ex,parsep=1ex,%
   label=(\Roman*)}%

\newlist{compactenumi}{enumerate}{5}%
\setlist[compactenumi]{topsep=0pt,itemsep=-1ex,partopsep=1ex,parsep=1ex,%
   label=(\roman*)}%



\providecommand{\Mh}[1]{#1}%


\newcommand{\URad}{\Mh{\Delta}}%
\newcommand{\sRad}{\Mh{\delta}}%


\newcommand{\Event}{\Mh{\mathcal{E}}}%
\newcommand{\ProbLTR}{\mathbb{P}}%

\newcommand{\Prob}[1]{\mathop{\ProbLTR} \mleft[ #1 \mright]}%
\newcommand{\ProbCond}[2]{\mathop{\ProbLTR}\!\left[%
       #1 \;\middle\vert\; #2 \right]}

\usepackage{stmaryrd}%

\newcommand{\IRY}[2]{\left\llbracket #1:#2 \right\rrbracket}
\newcommand{\eps}{\varepsilon}%

\newcommand{\etal}{\textit{et~al.}\xspace}
\newcommand{\pathY}[2]{\Mh{\mathrm{p}}\pth{#1, #2}}%
\newcommand{\Gyori}{Gy{\" o}ri\xspace}%
\newcommand{\G}{\Mh{G}}%
\newcommand{\VV}{\Mh{V}}%
\newcommand{\EE}{\Mh{E}}%
\newcommand{\Partition}{\Mh{{P}}}%
\newcommand{\BallY}[2]{\Ball\pth{#1, #2}} \newcommand{\Ball}{\Mh{b}}%


\definecolor{blue25emph}{rgb}{0, 0, 11}
\newcommand{\emphic}[2]{%
   \textcolor{blue25emph}{%
      \textbf{\emph{#1}}}%
   \index{#2}}

\newcommand{\emphi}[1]{\emphic{#1}{#1}}


\SarielComp{\usepackage{sariel_colors}}%

%
\providecommand{\BibLatexMode}[1]{}
\providecommand{\BibTexMode}[1]{#1}

\ifx\UseBibLatex\undefined%
  \renewcommand{\BibLatexMode}[1]{}
  \renewcommand{\BibTexMode}[1]{#1}
\else
  \renewcommand{\BibLatexMode}[1]{#1}
  \renewcommand{\BibTexMode}[1]{}
\fi

\BibLatexMode{%
   \usepackage[bibencoding=utf8,style=alphabetic,backend=biber,
   sortlocale=en_US]{biblatex}%
   \usepackage[english]{babel}
   \usepackage{csquotes}
   \UsePackage{sariel_biblatex}%
}

%

%

\BibLatexMode{%
}

\begin{document}

\title{Two (Known) Results About Graphs with No Short Odd Cycles}

\author{%
   Sariel Har-Peled\SarielThanks{Work on this paper
      was partially supported by a NSF AF awards
      CCF-1421231, and 
      CCF-1217462.  
   }%
   \and%
   Saladi Rahul\RahulThanks{}  %
}

\date{\today}

\maketitle

\begin{abstract}
    Consider a graph with $n$ vertices where the shortest odd cycle is
    of length $>2k+1$.  We revisit two known results about such
    graphs: %
    \medskip
    \begin{compactenumA}
        \item Such a graph is almost bipartite, in the sense that it
        can be made bipartite by removing from it
        $O\bigl( (n/k) \log (n/k) \bigr)$ vertices. While this result
        is known \cite{gkl-gwsoc-97} -- our new proof seems to yield
        slightly better constants, and is (arguably) conceptually
        simpler.

        To this end, we state (and prove) a version of C{K}R
        partitions \cite{ckr-aazep-04, frt-tbaam-04} that has a small
        vertex separator, and it might be of independent interest.
        While this must be known in the literature, we were unable to
        find a reference to it, and it is included for the sake of
        completeness.
        
        \medskip
        \item While such graphs can be quite dense (e.g., consider a
        the bipartite clique, which has no odd cycles), they have a
        large independent set. Specifically, we prove that such graphs
        have independent sets of size
        $\geq \bigl(1-o(1)\bigr)n^{k/(k+1)}$. Again, this result is
        known and is implied by the work of Shearer \cite{s-indgl-95},
        but our proof is simpler and (seems to) yield a better
        constant.
    \end{compactenumA}
\end{abstract}




\section{Graphs with no short odd cycles are almost %
   bipartite}

We start by proving a variant of a result of Fakcharoenphol \etal
\cite{frt-tbaam-04} about C{K}R partitions \cite{ckr-aazep-04}.
Fakcharoenphol \etal proved a bound on the probability of an edge to
be cut by the random partitions of C{K}R (and thus the number of such
edges being cut), while we are interested (somewhat imprecisely) in
the number of boundary vertices.  In particular, we show that a graph
with $n$ vertices can be broken into disconnected clusters of diameter
at most $\URad$, by removing only $O( (n/\URad) \log (n/\URad))$
vertices.  The proof is slightly simpler, and the constants are
slightly better than Fakcharoenphol \etal \cite{frt-tbaam-04}, and is
included for the sake of completeness -- while we were unable to find
a reference for it in the literature, we assume this result is
known. See \thmref{frt:vertices} for the precise statement.

We then use this to prove that graphs that have only long odd cycles
can be converted into bipartite graphs by removing ``few'' vertices.
See \thmref{remove} below for the precise statement.


\subsection{Partitions with small separating set}

Let $\G = (\VV,\EE)$ be an unweighted and undirected graph with $n$
vertices.  A \emphi{partition} of $\G$ is a set
$\Partition = \brc{C_1,\ldots,C_m}$ of disjoint subsets of $\VV$ such
that $\bigcup_{C_i \in \Partition} C_i = \VV$. The sets $C_i$ are
\emphi{clusters}.  For $x\in \VV$ and a partition $\Partition$, let
$\Partition(x)$ denote the unique cluster of $\Partition$ containing
$x$.

\newcommand{\distGY}[2]{\Mh{d}\pth{#1,#2}}%

In the following, for a vertex $v \in \VV$, and an integer $R$, let
$\BallY{v}{R} = \Set{x \in \VV}{\distGY{v}{x} \leq R}$ be the
\emphi{ball} of radius $r$ centered at $v$, where $\distGY{v}{x}$ is
the shortest path distance in $\G$ between $v$ and $x$ (i.e., it is
the minimum number of edges on a path between $v$ and $x$ in $\G$).

\paragraph{Constructing the partition.}

Let $\URad = 4\sRad$ be a prescribed parameter, for some integer
$\sRad$.  Choose, uniformly at random, a permutation $\pi$ of $\VV$
and a value
\begin{equation*}
    R \in \IRY{\URad/4}{\URad/2}%
    =%
    \brc{ \URad/4, \URad/4 +1, \ldots, \URad/2} .
\end{equation*}
The partition is now defined as follows: A vertex $x\in \VV$ is
assigned to the cluster $C_y$ of $y$, where $y$ is the first vertex in
the permutation within distance $\leq R$ from $x$. Formally,
\begin{equation*}
    C_y%
    =%
    \Set{ x \in \VV }{%
       \bigl.
       x \in \BallY{y}{R} \textrm { and } \pi(y)
       \leq \pi(z) \textrm{ for all $z \in \VV$ with } x \in
       \BallY{z}{R}
    },
\end{equation*}
Let $\Partition = \{C_y\}_{y \in \G}$ denote the resulting partition.
In words, once we fix the radius of the clusters $R$, we start
scooping out balls of radius $R$ centered at the vertices of the
random permutation $\pi$. At the $i$\th stage, we scoop out only the
remaining mass at the ball centered at $x_i$ of radius $R$, where
$x_i$ is the $i$\th vertex in the random permutation.

\newcommand{\guardsX}[1]{\Mh{\mathrm{guards}\pth{#1}}}

\subsubsection{Properties}

An edge $uv \in \EE$ is a \emphi{cross edge} of a partition
$\Partition$ if $u$ and $v$ belong to different clusters of
$\Partition$. A vertex $u$ of a partition $\Partition$ computed above
is a \emphi{guard} if $u \in C_y \in \Partition$, and
$\distGY{y}{u} = R$. Let $\guardsX{\Partition}$ be the set of all the
vertices of $\VV$ that are guards.

\newcommand{\diameterX}[1]{\Mh{\mathrm{diam}}\pth{#1}}%

\begin{lemma}
    \lemlab{cross}%
    Let $\Partition$ be a partition computed by the above scheme. We
    have the following:
    \begin{compactenumA}
        \item For any $C \in \Partition$, we have
        $\diameterX{C} \leq \URad$, where $\diameterX{C}$ is the
        \emph{diameter} of $C$.

        \item
        Let $uv$ be a cross edge of $\Partition$. Then $u$ is a guard or
        $v$ is a guard.
    \end{compactenumA}
\end{lemma}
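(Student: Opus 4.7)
Both parts are short and follow directly from unpacking the definitions, so the proposal is less about finding ingredients and more about laying them out cleanly.

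For part (A), the plan is to use the triangle inequality on the cluster center. If $x_1, x_2 \in C_y$, then by construction both lie in $\BallY{y}{R}$, so $\distGY{x_1,y} \le R$ and $\distGY{x_2,y} \le R$, giving $\distGY{x_1,x_2} \le 2R$. Since $R \le \URad/2$, we get $\diameterX{C_y} \le \URad$. This step is routine.

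For part (B), the plan is a contradiction argument using the permutation's role as a tie-breaker. Suppose $uv$ is a cross edge with $u \in C_y$, $v \in C_z$, $y \ne z$, and assume for contradiction that neither $u$ nor $v$ is a guard, i.e., $\distGY{y,u} \le R - 1$ and $\distGY{z,v} \le R - 1$. Since $uv$ is an edge, the triangle inequality gives $\distGY{y,v} \le \distGY{y,u} + 1 \le R$ and, symmetrically, $\distGY{z,u} \le R$. So $v \in \BallY{y}{R}$ and $u \in \BallY{z}{R}$. Now the assignment rule forces $\pi(z) \le \pi(y)$ (because $v$ got assigned to $z$, yet $y$ is a competitor covering $v$) and $\pi(y) \le \pi(z)$ (symmetrically). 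Since $y \ne z$ and $\pi$ is a permutation, these inequalities cannot both hold, a contradiction. Hence at least one of $u, v$ is a guard.

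The only step that requires any care is the boundary case in the non-guard inequalities: a non-guard in $C_y$ satisfies $\distGY{y,u} < R$, and since the graph is unweighted and the distances are integers, this is equivalent to $\distGY{y,u} \le R-1$, which is precisely what makes the one-edge slack in the triangle inequality suffice. No further obstacles are expected; the whole proof of \lemref{cross} should be a few lines.
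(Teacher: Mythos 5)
Your proof is correct and follows essentially the same route as the paper: part (A) is the same ball-radius/triangle-inequality observation, and part (B) is the same argument combining the one-edge triangle-inequality slack with the permutation priority in the assignment rule (you symmetrize over both endpoints to get $\pi(y)=\pi(z)$, while the paper fixes which center comes first in $\pi$ and shows that endpoint must be a guard — an immaterial difference).
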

\begin{proof}
    (A) A cluster is a subset of a ball of radius $R$, and as such its
    diameter is at most $2R \leq \URad$.

    (B) Assume $u \in C_x$ and $v \in C_y$, and $x$ was before $y$ in
    the permutation (the other case is handled symmetrically). If $u$
    is not a guard, then $\distGY{x}{u} <R$. But then
    $\distGY{x}{v} \leq \distGY{x}{u} +1 \leq R$, and $v$ would be in
    $C_x$. A contradiction.
\end{proof}

\begin{lemma}
    \lemlab{u:F:R:T}%
    Let $\G=(\VV,\EE)$ be an undirected and unweighted graph over $n$
    vertices, and let $\URad = 4\sRad$ be a prescribed parameter (for
    some integer $\sRad$), and let $\Partition$ be the random
    partition of $\G$ generated by the above scheme.  Then, for any
    vertex $x \in \VV$, we have that
    \begin{math}
        \displaystyle \Prob{ \bigl. x \in \guardsX{\Partition}}%
        \leq%
        \frac{4}{\URad} \ln \frac{ \cardin{\BallY{x}{\URad / 2}}}
        {\cardin{\BallY{x}{\URad/4 -1}}}.
    \end{math}
\end{lemma}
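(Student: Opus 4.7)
The plan is to condition on the random radius $R$, use the symmetry of the uniform permutation $\pi$ to write the conditional probability in closed form, and then average over $R$ with a telescoping argument. By the construction, the cluster containing $x$ is $C_y$, where $y$ is the vertex of smallest $\pi$-rank in the set $\Set{z \in \VV}{x \in \BallY{z}{R}}$, which (since the graph is undirected) equals $\BallY{x}{R}$. Thus, for a fixed value of $R$, the vertex $x$ is a guard if and only if this $\pi$-minimum lies on the sphere of radius exactly $R$, i.e., in $\BallY{x}{R} \setminus \BallY{x}{R-1}$. Because $\pi$ is uniformly random, the conditional probability of this event is
\begin{equation*}
    \ProbCond{x \in \guardsX{\Partition}}{R}
    \;=\; 1 - \frac{\cardin{\BallY{x}{R-1}}}{\cardin{\BallY{x}{R}}}.
\end{equation*}

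Writing $n_r = \cardin{\BallY{x}{r}}$ and averaging over the $\URad/4 + 1$ equally likely integer values of $R$ in $\IRY{\URad/4}{\URad/2}$, I would then obtain
\begin{equation*}
    \Prob{x \in \guardsX{\Partition}}
    \;=\;
    \frac{1}{\URad/4 + 1}
    \sum_{R = \URad/4}^{\URad/2} \frac{n_R - n_{R-1}}{n_R}.
\end{equation*}
Next I would apply the elementary inequality $1 - t \leq \ln(1/t)$ for $t \in (0,1]$ with $t = n_{R-1}/n_R$, which dominates each summand by $\ln(n_R/n_{R-1})$. The sum then telescopes to $\ln\pth{n_{\URad/2} / n_{\URad/4 - 1}}$, and combining this with $1/(\URad/4+1) \leq 4/\URad$ yields the stated bound.

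The only mildly nontrivial step is the first one: correctly identifying the event ``$x \in \guardsX{\Partition}$'' with the event that the $\pi$-minimum of $\BallY{x}{R}$ lies on the sphere of radius exactly $R$. Once that is pinned down, the rest is the standard \cite{ckr-aazep-04,frt-tbaam-04} averaging, and it is noticeably cleaner than the edge-cut version because we only need to track one ball centered at $x$ rather than two balls centered at the endpoints of a putative cut edge.
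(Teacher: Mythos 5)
Your proof is correct, and it takes a somewhat different route than the paper. The paper runs the classical C{K}R/FRT analysis: it orders the vertices $w_1,\ldots,w_M$ of $\BallY{x}{\URad/2}$ by increasing distance from $x$, introduces the disjoint events $\Event_k$ that $w_k$ is the center capturing $x$ at distance exactly $R$, bounds $\Prob{\bigl.\Event_k} \leq (4/\URad)\cdot(1/k)$ by combining the uniformity of $R$ with the observation that $w_k$ must precede $w_1,\ldots,w_{k-1}$ in $\pi$, and finishes with the harmonic-sum bound $\sum_{k=m+1}^{M} 1/k \leq \ln(M/m)$. You instead condition on $R$ first, identify the center of $x$'s cluster as the $\pi$-minimal vertex of $\BallY{x}{R}$ (your one nontrivial identification, and it is right: $x$ is a guard exactly when that minimum lies at distance exactly $R$), and so obtain the conditional guard probability \emph{exactly} as $1 - \cardin{\BallY{x}{R-1}}/\cardin{\BallY{x}{R}}$; the inequality $1-t \leq \ln(1/t)$ and telescoping then give the claim. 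Writing $n_r = \cardin{\BallY{x}{r}}$, your summand $(n_R - n_{R-1})/n_R$ is term-by-term at most the paper's grouped contribution $\sum_{k=n_{R-1}+1}^{n_R} 1/k$, so your intermediate expression is, if anything, marginally tighter, and it avoids the event decomposition altogether. What the paper's organization buys is that it is the argument which carries over verbatim to the edge-cutting analysis of Fakcharoenphol \etal, where conditioning on $R$ no longer yields a clean closed form because two balls (around the two endpoints) are involved -- a trade-off you correctly point out yourself. Either write-up would be acceptable here.
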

\begin{proof}
    Let $U= \BallY{x}{\URad/2}$, $M = \cardin{U}$, and
    $m = \cardin{\BallY{x}{\URad/4 -1}}$.  Arrange the vertices of
    $U$ in increasing distance from $x$, and let $w_1, \ldots, w_{M}$
    denote the resulting order.  Let $\Event_k$ for the event that
    $w_k$ is the \emph{first} vertex in $\pi$ such that
    $x \in C_{w_k}$ and $\distGY{w_k}{x} = R$. Observe that if
    $x \in \guardsX{\Partition}$, then one of the events
    $\Event_1, \ldots, \Event_{M}$ must occur.
    
    Let $\BallY{x}{\URad/4 - 1} = \brc{w_1, \ldots, w_m}$.  Note that
    if $w_k \in \BallY{x}{\URad/4 - 1}$, then $\Prob{\Event_k} = 0$
    since $R > \URad/4 -1$, which implies $\distGY{w_k}{x} < R$.  As
    such, for $i=1, \ldots,m$, we have
    $\Prob{ \Event_1} = \cdots = \Prob{ \Event_m} = 0$.

    Observe that
    \begin{math}
        \Bigl.\Prob{ \bigl. \Event_k }%
        =%
        \Prob{\bigl. \Event_k \cap (R = \distGY{w_k}{x}) }%
        =%
        \Prob{\bigl.R = \distGY{w_k}{x}} \cdot \ProbCond{
           \bigl. \Event_k}{R = \distGY{w_k}{x}}.
    \end{math}
    Since $R$ is uniformly distributed in the interval
    $\IRY{\URad/4}{\URad/2}$, we have that
    \begin{math}
        \Prob{R = \distGY{w_k}{x}}%
        \leq%
        1/(\URad/4 +1)%
        \leq%
        4 /\URad.
    \end{math}

    To bound $\beta_k = \ProbCond{ \Event_k}{R = \distGY{w_k}{x}}$, we
    observe that $w_1, \ldots, w_{k-1}$ are closer (or of the same
    distance) to $x$ than $w_k$. Thus, if any of them appear before
    $w_k$ in $\pi$, then $\Event_k$ does not happen. Thus, $\beta_k$
    is bounded by the probability that $w_k$ is the first to appear in
    $\pi$ out of $w_1, \ldots, w_k$. This probability is $1/k$, and
    thus $\beta_k \leq 1/k$.  We have that
    \begin{equation*}
        \Prob{\bigl.\Event_k}%
        =%
        \Prob{\Bigl. R = \distGY{w_k}{x}} 
        \ProbCond{ \Bigl. \Event_k}{R = \distGY{w_k}{x}}
        \leq%
        \frac{4}{\URad} \cdot \frac{1}{k}.
    \end{equation*}
    We conclude that 
    \begin{math}
        \displaystyle%
        \Prob{\bigl. x \in \guardsX{\Partition}} %
        =%
        \sum_{k=1}^{M} \Prob{\bigl.\Event_k}%
        =%
        \sum_{k=m+1}^{M} \Prob{\bigl. \Event_k} %
        \leq%
        \frac{4}{\URad} \sum_{k=m+1}^{M} \frac{1}{k} \leq%
        \frac{4}{\URad} \ln \frac{M}{m}.
    \end{math}
\end{proof}

We thus get the following result, which is variant of the result of
Fakcharoenphol \etal \cite{frt-tbaam-04}.
\begin{theorem}
    \thmlab{frt:vertices}%
    Let $\G = (\VV,\EE)$ be an unweighted and undirected graph with
    $n$ vertices. Given a parameter $\URad = 4\sRad$, for some
    positive integer $\sRad$, one can randomly partition the graph
    into clusters $C_1, \ldots C_t$, such that:
    \begin{compactenumA}
        \item Every cluster has a center vertex $c_i$, such that for
        all $x \in C_i$, we have $\distGY{c_i}{x} \leq \URad/2$ (and
        thus, $\diameterX{C_i} \leq \URad$).

        \item There is a set of vertices $X$, of expected size
        $\leq \frac{4n}{\URad} \ln \frac{4n}{\URad}$, such that
        there is no edge between a vertex of $C_i \setminus X$ and
        $C_j \setminus X$, for all $i\neq j$.
    \end{compactenumA}
\end{theorem}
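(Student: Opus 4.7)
The plan is to apply the random scheme described in this section with parameter $\URad = 4\sRad$, and then simply patch together the two preceding lemmas. Let $\Partition = \brc{C_y}_{y \in \VV}$ be the partition the scheme produces, with each cluster $C_y$ generated by its center vertex $y$; I would take this $y$ as the center $c_i$ of $C_i$. Set the separator to be $X := \guardsX{\Partition}$.

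Property (A) is then immediate from the construction: every vertex of $C_y$ lies in $\BallY{y}{R}$, so $\distGY{y}{x} \leq R \leq \URad/2$, and \lemref{cross}(A) bounds the cluster diameter by $\URad$. Property (B) follows directly from \lemref{cross}(B): any edge between two distinct clusters has at least one endpoint that is a guard, and hence lies in $X$, so after removing $X$ no edge crosses between clusters.

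It remains only to bound $\Ex{\cardin{X}}$. By linearity of expectation combined with \lemref{u:F:R:T},
\begin{equation*}
    \Ex{\cardin{X}}
    \;\leq\;
    \frac{4}{\URad}
    \sum_{x \in \VV}
    \ln \frac{\cardin{\BallY{x}{\URad/2}}}
             {\cardin{\BallY{x}{\URad/4-1}}}.
\end{equation*}
To reach the claimed $\frac{4n}{\URad}\ln\frac{4n}{\URad}$, I would apply Jensen's inequality to the $n$ ratios $r_x := \cardin{\BallY{x}{\URad/2}}/\cardin{\BallY{x}{\URad/4-1}}$ (using that $\ln$ is concave), obtaining $\sum_x \ln r_x \leq n \ln\pth{\tfrac{1}{n} \sum_x r_x}$; it would then suffice to show the ``average growth'' bound $\sum_x r_x \leq 4n^2/\URad$.

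This last step is the main obstacle, and the only place where the graph-theoretic structure really enters. I expect to establish it via a double-counting / telescoping argument across intermediate radii $r \in \brc{\URad/4, \ldots, \URad/2}$, leveraging the nesting $\BallY{x}{\URad/4-1} \subseteq \BallY{x}{r} \subseteq \BallY{x}{\URad/2}$ together with the global bound $\cardin{\BallY{x}{\URad/2}} \leq n$. Once this aggregate inequality is in hand, everything composes: Jensen converts it into the log-average bound, and combining with the per-vertex estimate from \lemref{u:F:R:T} yields the theorem.
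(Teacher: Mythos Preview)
Your setup and your handling of properties (A) and (B) match the paper exactly: take the random partition, set $X = \guardsX{\Partition}$, and invoke \lemref{cross}. The divergence is only in the final step, bounding $\Ex{\cardin{X}}$.

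Here you are working much harder than necessary, and the hard part is left unfinished. You flag the aggregate inequality $\sum_x r_x \leq 4n^2/\URad$ as ``the main obstacle'' and gesture at a double-counting/telescoping argument over intermediate radii, but you do not actually carry it out, and it is not clear the sketched route gets there. The paper bypasses all of this with a trivial \emph{pointwise} bound on each ratio: assuming without loss of generality that $\G$ is connected, one has $\cardin{\BallY{x}{\URad/2}} \leq n$ and $\cardin{\BallY{x}{\URad/4-1}} \geq \URad/4$ (in a connected graph a ball of radius $r$ contains at least $\min(r{+}1,n)$ vertices). Hence every $r_x \leq 4n/\URad$, so $\sum_x \ln r_x \leq n \ln(4n/\URad)$ directly --- no Jensen, no telescoping. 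Note in particular that the lower bound on the denominator is the one piece of ``graph-theoretic structure'' actually used, and you never invoke connectivity, which is why your argument stalls.
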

\begin{proof}
    The construction is described above.  We take
    $X = \guardsX{\Partition}$, which by \lemref{cross}, is the
    desired separating set.  In a connected graph, for any vertex
    $x \in \VV$, we have $ \cardin{\BallY{x}{\URad / 2}} \leq n$ and
    $\cardin{\BallY{x}{\URad/4 -1}} \geq \URad/4$.  As such, by
    linearity of expectation and \lemref{u:F:R:T}, we have
    \begin{equation*}
        \Ex{\bigl.\cardin{\guardsX{\Partition}}}%
        =%
        \sum_{v \in \VV} \Prob{\bigl. x \in \guardsX{\Partition}}%
        \leq%
        n \cdot
        \frac{4}{\URad} \ln \frac{ \cardin{\BallY{x}{\URad / 2}}}
        {\cardin{\BallY{x}{\URad/4 -1}}}.
        \leq%
        \frac{4n}{\URad} \ln \frac{4n}{\URad}.
    \end{equation*}
\end{proof}

\subsection{Application: Graphs with no short odd %
   cycles are almost bipartite}

    Odd cycles have an interesting hereditary property -- if there is
    an odd cycle $C$ that is not simple (or it has a chord), then
    there must be a shorter simple odd cycle in the graph: \smallskip%

\noindent%
\begin{minipage}{0.6\linewidth}
\begin{compactenumI}
    \item \itemlab{I} If $C$ repeats a vertex twice, then it can be decomposed
    into two shorter cycles.  One of these cycles must be odd.

    \item If $C$ repeats an edge twice, then it repeats a vertex
    twice, and \itemref{I} applies.

    \item If there is an edge between two non-adjacent vertices of
    $C$, then it can be split in a similar fashion.
\end{compactenumI}
\smallskip%
\end{minipage}%
\hfill%
\begin{minipage}{0.37\linewidth}
    \includegraphics[page=1,width=0.49\linewidth]{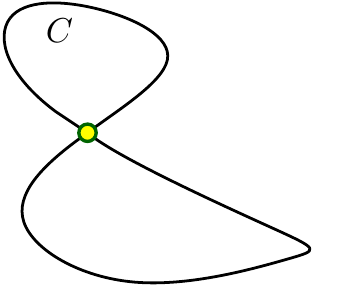}
    \includegraphics[page=2,width=0.49\linewidth]{figs/odd_split}
\end{minipage}%

\noindent%
Applying this argument repeatedly results in a simple odd cycle that
is a sub-cycle of $C$.  Note, that this hereditary property does not
hold for even length cycles.

In particular, using the above partition theorem, we get the following
result of \Gyori \etal \cite{gkl-gwsoc-97} -- our new proof seems to
yield slightly better constants, and is conceptually simpler if
\thmref{frt:vertices} is a given.

\begin{theorem}
    \thmlab{remove}%
    Let $\G = (\VV,\EE)$ be an undirected graph over $n$ vertices,
    such that all the odd cycles in $\G$ are of length $> 2k+1$, for
    some integer $k$.  Then, there is a set $X$ of at most
    $\frac{ n}{\floor{k/2}} \ln \frac{ n } {\floor{k/2}}$ vertices,
    such that removing $X$ from $\G$ results in a bipartite graph.
\end{theorem}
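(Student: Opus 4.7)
The plan is to apply \thmref{frt:vertices} with $\sRad = \floor{k/2}$ and $\URad = 4\sRad \leq 2k$, producing clusters $C_1, \ldots, C_t$ of diameter at most $\URad$, each with a center $c_i$ satisfying $\distGY{c_i}{x} \leq \URad/2$ for all $x \in C_i$, together with a guard set $X$ of expected size at most $\frac{4n}{\URad}\ln\frac{4n}{\URad} = \frac{n}{\floor{k/2}}\ln\frac{n}{\floor{k/2}}$. (If $\G$ is disconnected, apply this to each component and sum.)

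The key step is to argue that after deleting $X$, each cluster is internally bipartite. Fix $C_i$ with center $c_i$, and stratify its vertices by parity of $\distGY{c_i}{\cdot}$: let $A_i$ (resp.\ $B_i$) be the vertices of $C_i$ at even (resp.\ odd) distance from $c_i$. Suppose for contradiction that some edge $uv$ of $\G$ has both endpoints in $A_i$ (the all-$B_i$ case is symmetric). Concatenating a shortest $c_i \to u$ path, the edge $uv$, and a shortest $v \to c_i$ path produces a closed walk of odd length at most $\URad/2 + 1 + \URad/2 = \URad + 1 \leq 2k+1$. By the hereditary property of odd cycles discussed above, every odd closed walk contains a simple odd sub-cycle of no greater length, so $\G$ would contain a simple odd cycle of length $\leq 2k+1$, contradicting the hypothesis.

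Combining this bipartiteness with the guarantee of \thmref{frt:vertices} that no edge of $\G \setminus X$ crosses between different clusters, the whole graph $\G \setminus X$ is a disjoint union of bipartite subgraphs, hence bipartite. A standard probabilistic-method argument then yields a realization of the random partition for which $\cardin{X}$ does not exceed its expectation, giving the bound stated in the theorem.

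The one delicate point is the calibration of $\URad$: we must have $\URad + 1 \leq 2k+1$ so that the odd closed walk produced above falls within the forbidden odd-girth regime, yet $\URad$ should be as large as possible to minimize $\frac{4n}{\URad}\ln\frac{4n}{\URad}$. The constraint $\URad = 4\sRad$ from \thmref{frt:vertices} therefore forces the largest admissible choice $\sRad = \floor{k/2}$, which is precisely the constant appearing in the final bound; once this calibration is fixed, the rest of the argument is essentially a direct invocation of the partition theorem.
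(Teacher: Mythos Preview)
Your proof is correct and follows essentially the same approach as the paper: choose $\sRad=\floor{k/2}$, apply \thmref{frt:vertices}, and argue that each cluster is bipartite by layering vertices according to their distance from the center (your parity bipartition is equivalent to the paper's observation that no edge joins two vertices at equal distance from $c_i$, since adjacent vertices differ in distance by at most one). The only cosmetic differences are that you explicitly invoke the hereditary property of odd closed walks and the probabilistic-method step to pass from expectation to existence, both of which the paper leaves implicit.
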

\begin{proof}
    Assume $\G$ is connected, as otherwise we apply the argument below
    to each connected component.
    
    Let $\sRad = \floor{k/2}$, $r = 2\sRad$, and let $\URad =
    4\sRad$. Compute the partition $\Partition$ of
    \thmref{frt:vertices}. This breaks the graph into $t$ clusters
    $C_1, \ldots, C_t$, where all the vertices of $C_i$ are in
    distance at most $r$ from its center $c_i$.

    Observe that there are no two vertices $x,y \in C_i$, such that
    $t = \distGY{c_i}{x} = \distGY{c_i}{y}$ and $xy$ is an edge in the
    graph. If there is such an edge, then consider the cycle
    $\sigma = \pathY{c_i}{x} + xy + \pathY{y}{c_i}$, where
    $\pathY{u}{v}$ denotes the shortest path in $\G$ between $u$ and
    $v$. The cycle $\sigma$ is an odd cycle in $\G$ of length
    \begin{math}
        \distGY{c_i}{x} +1 + \distGY{y}{c_i}%
        =%
        2t + 1%
        \leq%
        2r + 1%
        =%
        4\!\floor{k/2} +1%
        \leq%
        2k +1,
    \end{math}
    which is a contradiction.
    
    An odd cycle can not be fully contained inside a cluster $C_i$,
    since it must contain an edge with both endpoints having the same
    distance to $c_i$, which is by the above impossible. Thus, all odd
    cycles in $\G$ must use a cross edge of $\Partition$. In
    particular, removing the set of vertices
    $X = \guardsX{\Partition}$ from $\G$, implies by \lemref{cross},
    that the remaining graph $\G' = \G \setminus X$ has no cross
    edges, and thus no odd cycles. Namely, $\G'$ is bipartite.
    \thmref{frt:vertices} implies that
    \begin{math}
        \Ex{\bigl.\cardin{X}}%
        \leq%
        \frac{4n}{\URad} \ln \frac{4n}{\URad}%
        =%
        \frac{4n}{4\floor{k/2}} \ln \frac{4n}{4\floor{k/2}}.
    \end{math}
\end{proof}

\begin{corollary}
    Let $\G = (\VV,\EE)$ be an undirected graph with $n$ vertices,
    such that all the odd cycles in $\G$ are of length $> \eps n$.
    Then one can remove $O(\eps^{-1} \log \eps^{-1})$ vertices from
    $\G$ and make the graph bipartite. More precisely, if
    $\eps n \geq 40$, then one can remove $\leq (5/\eps) \ln(5/\eps)$
    vertices and make the graph bipartite.
\end{corollary}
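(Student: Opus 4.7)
The corollary follows essentially by specializing \thmref{remove} to $k$ of size roughly $\eps n / 2$. The plan is to pick the largest integer $k$ for which the hypothesis ``odd cycles of length $> 2k+1$'' is implied by ``odd cycles of length $> \eps n$'' -- namely $k = \floor{(\eps n - 1)/2}$, so that $2k+1 \leq \eps n$ -- and then evaluate the bound $\frac{n}{\floor{k/2}} \ln \frac{n}{\floor{k/2}}$ from \thmref{remove}. The existence of a set $X$ of the desired size follows by applying the probabilistic method to the expected-size bound of \thmref{remove}.

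For the asymptotic statement, observe that for this choice of $k$ one has $\floor{k/2} = \Theta(\eps n)$, so $n/\floor{k/2} = O(1/\eps)$, and the $O(\eps^{-1} \log \eps^{-1})$ bound is immediate. For the sharper statement, a short floor calculation with $k = \floor{(\eps n - 1)/2}$ gives the chain $k \geq (\eps n - 3)/2$ and hence $\floor{k/2} \geq k/2 - 1/2 \geq (\eps n - 5)/4$. The hypothesis $\eps n \geq 40$ then comfortably implies $(\eps n - 5)/4 \geq \eps n / 5$ (which is equivalent to the weaker inequality $\eps n \geq 25$), so $n/\floor{k/2} \leq 5/\eps$. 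Since $x \ln x$ is monotonically increasing on $x \geq 1$, plugging this into \thmref{remove} yields the advertised bound $(5/\eps) \ln (5/\eps)$.

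There is no real obstacle beyond arithmetic bookkeeping: one must check that the two nested floors in $\floor{(\eps n - 1)/2}$ and $\floor{k/2}$ do not degrade the constant $4/\eps$ suggested by the continuous analysis into anything worse than $5/\eps$, and this is precisely what the numerical hypothesis $\eps n \geq 40$ is there to absorb. Everything else is a direct invocation of \thmref{remove}.
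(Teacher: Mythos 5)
Your proposal is correct and follows essentially the same route as the paper: choose $k = \floor{(\eps n-1)/2}$, invoke \thmref{remove}, and absorb the floor losses via the hypothesis $\eps n \geq 40$ (your intermediate bound $\floor{k/2} \geq (\eps n -5)/4$ is in fact marginally tighter than the paper's $\eps n/4 - 2$, but the argument is the same). No gaps.
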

\begin{proof}
    Let $k = \floor{(\eps n -1)/2}$. By \thmref{remove}, there is a
    set $X$ of size
    \begin{equation*}
        \frac{ n}{\floor{k/2}} \ln \frac{ n } {\floor{k/2}}        
        \leq%
        \frac{ n}{\eps n/4 -2} \ln \frac{ n } {\eps n/4 -2}        
        \leq 
        \frac{ 5}{\eps} \ln \frac{ 5 } {\eps},
    \end{equation*}
    since
    \begin{math}
        \eps n  \geq 40,
    \end{math}
    such that removing $X$ from $\G$ results in a bipartite graph.
\end{proof}


\newcommand{\ISetY}[2]{\Mh{\mathcal{I}}_{#1}\pth{#2}}%
\newcommand{\NbrY}[2]{\Mh{\mathcal{N}}_{#1}\pth{#2}}%
\newcommand{\BFS}{\textsf{BFS}\xspace}%

\section{Graphs with no short odd cycles have large %
   independent sets}

Interestingly, graphs with no short odd cycles have large independent
sets, and these sets can be computed by a simple algorithm. The
starting point is the greedy algorithm for independent set in a graph,
which works by finding a vertex of low degree, removing itself and its
neighbors, and repeating this till the graph is exhausted. The idea is
to do better by considering not only the direct neighbors of a vertex,
but rather inspecting the neighborhood of the vertex till a certain
depth. The layers of the neighborhood of a vertex must grow quickly
for each one of them not to be a good independent set to harvest,
implying that sooner or later a good layer would be encountered.

The following result is known and is implied by the work of Shearer
\cite{s-indgl-95}, but our version is simpler and (seems to) yield a
better constant.

\begin{lemma}
    Let $k > 1$ be an integer parameter, and let $\G=(\VV,\EE)$ be an
    undirected graph with $n$ vertices and $m$ edges, with all odd
    cycles in $\G$ being of length $>2k+1$, for some $k\geq 1$. Then,
    $\G$ has an independent set of size $(1 - o(1))n^{k/(k+1)}$, which
    can be computed in $O(n + m)$ time.
\end{lemma}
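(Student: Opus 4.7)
The plan is an iterative BFS-driven greedy. The crucial structural fact is: for any vertex $v$ and any $j \le k$, the BFS layer $L_j(v) = \{u : d_G(v,u)=j\}$ is an independent set in $G$, since an edge inside $L_j$ together with the two BFS-tree paths from $v$ would form a closed walk of odd length $2j+1 \le 2k+1$, which by the odd-cycle hereditary property of Section~1 contains a simple odd cycle of length $\le 2k+1$, contradicting the hypothesis.

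Fix the threshold $T = \lceil n^{1/(k+1)} \rceil$ and iterate the following on the current graph $G$. Let $v$ be a vertex of minimum degree $d$. If $d < T$, add $v$ to the output set $I$ and delete $\{v\}\cup N(v)$ (at most $T$ vertices). Otherwise run BFS from $v$ to depth $k+1$, write $n_j = |L_j(v)|$, and pick $j^*$ as the smallest $j \in \{1,\dots,k\}$ with $n_{j+1}/n_j < T$. Such $j^*$ must exist: if every such ratio were $\geq T$, telescoping would give $n_{k+1} = \prod_{j=0}^{k}(n_{j+1}/n_j) \cdot n_0 \ge T^{k+1} \ge n$, contradicting $n_0 = 1$ together with the disjointness of BFS layers. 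Add the independent layer $L_{j^*}$ to $I$ and delete $L_{j^*-1}\cup L_{j^*}\cup L_{j^*+1}$ from $G$; every neighbour of $L_{j^*}$ lies in those three layers, so $I$ remains independent in all subsequent iterations.

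For the size bound, I compare the per-iteration gain $g$ and removal $r$. In the small-degree case, $g=1$ and $r \le T$, so $g/r \ge 1/T$. In the BFS case the minimality of $j^*$ forces $n_i \ge T\, n_{i-1}$ for every $i \le j^*$, hence $n_{j^*} \ge T^{j^*}$; simultaneously $n_{j^*-1} \le n_{j^*}/T$ and $n_{j^*+1} < T\, n_{j^*}$, so $r \le (1/T + 1 + T)\, n_{j^*} \le (T+2)\, n_{j^*}$ and $g/r \ge 1/(T+2)$. Summing across iterations yields $|I| \ge n/(T+2) \ge n/(n^{1/(k+1)} + 3) = (1-o(1))\, n^{k/(k+1)}$.

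The main obstacle I anticipate is achieving the claimed $O(n+m)$ runtime, since each iteration naively performs a BFS. I would handle this with two ingredients: (i) maintain minimum-degree retrieval in amortized $O(1)$ via a bucket list under vertex deletions, and (ii) truncate each BFS at depth $j^*+1$, so that the visited set $B_{j^*+1}(v)$ has size at most $(T+2)\, n_{j^*}$ by the same telescoping bound, which matches (up to a constant) the number of vertices deleted in that iteration; a careful amortized analysis then shows each vertex and edge contributes only $O(1)$ total work across all iterations. Boundary cases to verify: $j^* = 0$ is excluded because in the BFS branch $d \ge T$ gives $n_1/n_0 = d \ge T$; BFSes that terminate before depth $k+1$ satisfy the criterion trivially with $n_{j+1}/n_j = 0 < T$, and the ratio only improves; and the notational convention $L_{-1} = \emptyset$ handles $j^* = 1$.
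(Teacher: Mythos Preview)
Your argument is essentially the paper's: iterate BFS from a root, locate the first layer whose growth ratio drops below $T=\lceil n^{1/(k+1)}\rceil$, harvest that layer as an independent set (using the odd-girth hypothesis exactly as you do), delete, and repeat. The telescoping that forces $j^*\le k$ and the accounting that yields $|I|\ge n/(T+2)=(1-o(1))n^{k/(k+1)}$ are identical. Your two extra ingredients --- selecting a minimum-degree root and splitting off the $d<T$ case --- are harmless but unnecessary: the paper picks an arbitrary root and handles small degree uniformly as the $j=1$ stopping case.

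The one substantive difference, and the place your argument has a gap, is the deletion rule. You remove only the three layers $L_{j^*-1}\cup L_{j^*}\cup L_{j^*+1}$, whereas the paper removes \emph{all} explored layers $L_0\cup\cdots\cup L_{j^*+1}$. This buys you nothing on the size bound (both give $g/r\ge 1/(T+2)$, since the lower layers are geometrically negligible), but it undermines your $O(n+m)$ claim. Your statement that $|B_{j^*+1}(v)|\le(T+2)n_{j^*}$ ``matches up to a constant the number of vertices deleted'' is defensible for vertex counts, but BFS cost is measured in \emph{edges}: every edge with both endpoints in $L_0\cup\cdots\cup L_{j^*-2}$ is scanned yet survives, and nothing prevents it from being rescanned in later iterations. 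The ``careful amortized analysis'' you defer to is not supplied; the natural charging (the surviving piece $L_0\cup\cdots\cup L_{j^*-2}$ becomes its own component of size at most a $1/T$ fraction of the old one) bounds the number of rescans per edge by $O(\log_T n)=O(k)$, which gives $O(k(n+m))$ rather than $O(n+m)$. The paper sidesteps this entirely: by deleting every layer the BFS touched, each edge is scanned once and then removed, so linear time is immediate.
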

\begin{proof}
    The algorithm repeatedly deletes vertices from the graph $\G$ and
    add some of them to the computed independent set.  For any vertex
    $v$ in the current graph $\G$, let $L_{i}(v)$ be the set of
    vertices of $\G$ at distance exactly $i$ from $v$ (as such,
    $L_{-1}(v) = \emptyset$ and $L_0(v) = \brc{v}$).  Let
    $d_i(v) = \cardin{L_i(v)}$ -- in particular, $d_0(v) = 1$, and
    $d_1(v) = d(v)$ is the degree of $v$.  Let $X$ be a set of
    independent vertices that is initially empty.

    Set $K = \ceil{n^{1/(k+1)}}$.  The algorithm repeatedly picks a
    vertex $v$, and computes a \BFS tree from $v$ level by level. For
    $j > 0$, after computing the $j$\th level of the \BFS tree (i.e,
    $L_j(v)$), the algorithm checks if $d_j(v) \leq K d_{j-1}(v)$. If
    so, the algorithm adds $L_{j-1}(v)$ to the independent set $X$,
    and removes the vertices of
    $L_{j}(v) \cup L_{j-1}(v) \cup \cdots \cup L_0(v)$ from $\G$. The
    algorithm does this till the graph is exhausted.

    Observe, that if the algorithm harvested the set $L_{j-1}(v)$ to
    the independent set, then 
    \begin{equation*}
        d_{j}(v)\leq K d_{j-1}(v)%
        \qquad\text{and}\qquad%
        d_{\ell}(v) > K d_{\ell-1}(v) \qquad\forall \ell=0,\ldots, j-1.
    \end{equation*}
    This implies that 
    \begin{math}
        d_{j-1}(v) > K^{j-1}.
    \end{math}
    As such, if $L_{k+1}(v)$ is being added by the algorithm to the
    output set (i.e., $j=k+2$), this would imply that
    $d_{k+1}(v) > K^{k+1}  \geq n$, which is impossible.
    
    As such, the sets being added to the output set, are of the form
    $L_i(v)$, for $i\leq k$. Such a set is independent, as otherwise,
    there would be an edge $xy \in \EE$ between two vertices of
    $L_i(v)$, but that would imply an odd cycle in $\G$ of length
    $\leq 2k+1$. This readily implies that the computed set is indeed
    an independent set in the original graph.

    As for the size of the independent set computed, observe that when
    $L_{j-1}(v)$ is being added to the output, the number of vertices
    being deleted is of size
    \begin{equation*}
        d_j(v) + d_{j-1}(v) + d_{j-2}(v) + \cdots +d_0(v)
        \leq%
        \pth{K + 1 + \frac{1}{K} +\frac{1}{K^2} + \ldots} d_{j-1}(v)%
        \leq %
        (K+2) d_{j-1}(v)
    \end{equation*}
    Namely, each vertex in the output independent set, pays for at
    most $K+2$ vertices in the graph, and the computed independent set
    is of size $\geq n /(K+2) = c n^{k/(k+1)}$, where
    \begin{equation*}
        c = \frac{n^{1/(k+1)}}{K+2}
        \geq%
        \frac{n^{1/(k+1)}}{n^{1/(k+1)} + 3}
        =%
        1 - \frac{3}{ n^{1/(k+1)} + 3}
        =%
        1 - o(1).
    \end{equation*}

    As for the running time of the algorithm, note that the \BFS
    computation, implemented carefully, takes time proportional to the
    number of edges and vertices being removed. Thus the overall
    running time of the algorithm is linear.
\end{proof}

Bock \etal \cite{bfmr-ssspt-14} claim an $O(n^{2.5})$ time algorithm
to find an independent set of size $\geq n^{k/(k+1)}/3$ for graphs
with all odd cycles in $\G$ being of length $>2k+1$. In comparison,
the above improves both the running time and the approximation
quality.


\BibTexMode{%
 \providecommand{\CNFFSTTCS}{\CNFX{FSTTCS}}  \providecommand{\CNFX}[1]{
  {\em{\textrm{(#1)}}}}

}
\BibLatexMode{\printbibliography}

\end{document}